\newtheorem{definition}{Definition}
\newtheorem{proposition}{Proposition}
\newtheorem{assumption}{Assumption}
\newtheorem{thm}{Theorem}
\title{\LARGE \bf
Distributed Learning Dynamics for Coalitional Games
}
\author{Aya Hamed$^{1}$ and Jeff S. Shamma$^{1}$
\thanks{*This work was supported by the University of Illinois Urbana-Champaign}
\thanks{$^{1}$Aya Hamed and Jeff S. Shamma are with the Department of Industrial and Enterprise Systems Engineering, the Grainger College of Engineering, University of Illinois at Urbana-Champaign, 117 Transportation Building, MC-238, 104 S. Mathews Ave. Urbana, IL 61801-3080, USA {\tt\small ayah2@illinois.edu, jshamma@illinois.edu}}%
}
\begin{document}
\makenomenclature
\maketitle
\thispagestyle{empty}
\pagestyle{empty}

\begin{abstract}
In the framework of transferable utility coalitional games, a scoring (characteristic) function determines the value of any subset/coalition of agents. Agents decide on both which coalitions to form and the allocations of the values of the formed coalitions among their members. An important concept in coalitional games is that of a core solution, which is a partitioning of agents into coalitions and an associated allocation to each agent under which no group of agents can get a higher allocation by forming an alternative coalition. We present distributed learning dynamics for coalitional games that converge to a core solution whenever one exists. In these dynamics, an agent maintains a state consisting of (i) an aspiration level for its allocation and (ii) the coalition, if any, to which it belongs. In each stage, a randomly activated agent proposes to form a new coalition and changes its aspiration based on the success or failure of its proposal. The coalition membership structure is changed, accordingly, whenever the proposal succeeds. Required communications are that: (i) agents in the proposed new coalition need to reveal their current aspirations to the proposing agent, and (ii) agents are informed if they are joining the proposed coalition or if their existing coalition is broken. The proposing agent computes the feasibility of forming the coalition. We show that the dynamics hit an absorbing state whenever a core solution is reached. We further illustrate the distributed learning dynamics on a multi-agent task allocation setting.

\end{abstract}

\section{INTRODUCTION}
In strategic interactions with self-interested agents, cooperating with other agents can be the optimal strategic decision an agent can make to maximize its own benefits. In some interactive situations, the whole can be bigger than the sum of its parts, and agents forming coalitions can collectively gain higher payoffs. Coalitional games, within the field of cooperative game theory, enable us to study such situations. 

The authors in \cite{c1}\textendash\cite{c3} use coalitional game theory concepts to study the problem of forming coalitions of microgrids for local power exchange. Such collaboration between the microgrids increases their revenues and improves the autonomy of the system by decreasing their reliance on the main grid. In communications and networks fields, coalitional game theory offers a suitable framework to address a spectrum of problems. The tutorial paper \cite{c4} provides a classification of coalitional games and a variety of applications for each. One such application is the rate allocation problem presented in \cite{c5}, where the users of a multiaccess channel are modeled as agents in a coalitional game. The value of any coalition of agents depends on the maximum sum-rate achieved by the coalition given that the agents outside the coalition will act as jammers. The authors motivate various allocation methods for the agents including two of the most common solution concepts in coalitional game theory, the core and the Shapley value. 
 
Coalitional game theory is applied to various problems in multi-agent systems. Coalitional games framework was utilized in \cite{c6} in the control of a multi-agent system. Coalitions between agents dictate the communication between them and, hence, the collective optimization that they solve. The grand coalition, i.e., the union of all agents, represents a centralized optimization problem, and the partition of agents into singletons represents the fully decentralized optimization. The cost of coalitions depends on both the cost resulting from the coalition optimization as well as the cost of forming the coalition. Even though centralized optimization yields the least cost for the optimization problem, it has the highest overhead of communication between agents and complexity of the optimization problem. The authors in \cite{c7} and \cite{c8} use coalitional game theory solution concepts to solve the multi-robot task allocation problem. The Shapley value solution is used in \cite{c7} to group the robots into coalitions to fulfill the available tasks and specify the reward of each robot. The authors in \cite{c8} use coalition formation as an intermediary step to group robots and tasks into smaller coalitions, after which they solve for the optimal task allocation within these coalitions. 

In multi-agent systems, a centralized agent is not always accessible to mediate computing a solution with satisfactory payoffs for all agents. In addition, reliance on a centralized agent means having a single point of failure, which threatens the robustness of the system. Furthermore, privacy concerns and communication constraints may limit the information an agent is willing or capable of sharing. Our proposed distributed learning algorithm addresses these concerns. The presented algorithm converges to a core solution, a central solution concept in coalitional game theory, whenever one exists. The algorithm does not require the agents to learn about the full state of the environment. Alternatively, at any time instance, an agent is randomly activated, which then needs to know only the current payoff aspirations of a subset of its neighboring agents. Furthermore, the algorithm has small memory and computation requirements from the agents. They only need to retain their current payoff aspirations, coalition membership, and the values of the coalitions that they can form or a means of calculating such values online. 

The core of Transferable Utility (TU) coalitional games is defined by a distribution of the payoff resulting from forming the grand coalition. The core allocation respects the individual rationality of the agents such that no agent gains less in the grand coalition than what it can gain on its own. In addition, the core guarantees stability against group deviations where no group of agents can gain more by jointly deviating from the grand coalition. Finally, the core represents efficiency in that the total payoffs gained by the agents equals the value of the grand coalition. However, in some settings, the grand coalition is not the optimal coalition to form; specifically, when there is an associated cost with coalition formation that grows with the size of the coalition. The review paper \cite{c9} mentions the lack of literature in coalitional game theory for solution concepts that consider these settings, which readily occur in energy applications. In this paper, we consider a generalized core solution concept, similarly defined in \cite{c10} and \cite{c11}, which applies to such settings. This generalized core solution still preserves individual rationality and stability against deviation but with an extension to the efficiency concept such that the sum of the payoffs gained by the agents equals the maximum welfare that can be gained by any partitioning of the agents. Our proposed dynamics converge to this generalized core solution for general TU games, whenever one exists.

Classical core and Shapley value calculations do not consider the dynamics of coalition formation and dissolution. In addition to the assumption of the grand coalition optimality in most of the literature, the presented solutions and algorithms come short in connecting the reached solutions with real-life bargaining and negotiation setups. Our proposed dynamics exhibit feasible coalition formation dynamics throughout the horizon of iterations. Similar distributed aspiration-based algorithms were introduced in \cite{c12}\textendash \cite{c14}. Our algorithm builds upon the blind matching algorithm in \cite{c12} and \cite{c13}. It uses similar aspiration-based states and negotiation mechanism to reach the final allocations of the agents. The algorithm in \cite{c14} considers superadditive TU games, games where the value of a coalition is at least as good as the sum of values of any disjoint set of its subcoalitions. In superadditive TU games, the grand coalition is always optimal to form. The algorithm thus convergences to the standard core solution of these games. For general TU games, \cite{c10} and \cite{c11} provide similar convergence results to ours. However, they use best reply response, which imposes higher requirements on the agents' knowledge of their opponents' demands in addition to a global knowledge of the evolving coalition structure.

The rest of this paper is organized as follows. Section II presents the TU games setup and relevant propositions. Section III introduces the Coalition Proposal algorithm. Section IV discusses the convergence proof. Section V exhibits simulation results using the Coalition Proposal algorithm in a multi-agent task allocation setting. Finally, Section VI concludes the paper.

\section{ TRANSFERABLE UTILITY GAMES}

In this section, we present the TU games setup along with some definitions and propositions pertaining to it. We are following Hans Peters' book \cite{c15} in the fundamental definitions of the TU games, Definitions \ref{def1} and \ref{def2} below. However, we are generalizing the solution concept in Definition \ref{def3} to better suit general TU games where grand coalition formation does not necessarily result in the optimal welfare.

\begin{definition}
\label{def1}
A \textbf{transferable utility coalitional game} is defined by the pair $({N},v),$ such that ${N}=\{1,2,...,n\}$ is the set of players and $v:2^{{N}}\rightarrow \mathbb{R},$ where $v(\emptyset)=0,$ is the characteristic function defining the value of each coalition of players. An \textbf{allocation} $\mathbf{x}\in \mathbb{R}^n$ is a vector of real numbers representing the payoff distribution among the players.
\end{definition}

\begin{definition}
\label{def2}
For a TU game $(N,v),$ an allocation $\mathbf{x}$ is \textbf{indvidually rational} if $ x_i \geq v(\{i\})$ for all $i\in N$ and \textbf{coalitionally rational} if $ \sum_{i \in S}x_i \geq v(S)$ for all $S\subseteq N.$
 \end{definition}

\begin{definition} A collection $\rho$ of subsets of $N$ is defined as a \textbf{partition} of $N$ if $S\cap S'=\emptyset$ for all $S,S'\in\rho$ and $\cup_{S\in \rho}S=N.$ We denote the set of all partitions of $N$ by $\mathcal{P}(N).$

\end{definition}

\begin{definition}
\label{def3}
A \textbf{core solution} of a TU game $(N,v)$ is a pair $(\mathbf{x},\rho)$, where $\mathbf{x}$ is an allocation vector and $\rho \in \mathcal{P}(N)$ is a partition of players, satisfying that $\mathbf{x}$ is coalitionally rational, $ \sum_{i \in S}x_i \geq v(S)$ for all $S\subseteq N,$ and $\sum_{i\in {S}}x_i = v({S})\ $ for all ${S} \in \rho.$ We denote \textbf{the set of core solutions} of a game $(N,v)$ by $\Pi(v).$
\end{definition}

\begin{definition}
\label{def4}
    For a TU game $(N,v),$ we denote the \textbf{maximum welfare value}, $\max_{\rho\in\mathcal{P}(N)}\sum_{S\in\rho}v(S),$ by $K_v.$ 
\end{definition}

\begin{proposition}
\label{prop:prop1}
  For any TU game $({N},v),$ if $\Pi(v) \neq \emptyset,$ then $\sum_{i \in {N}}x_i=K_v\ $for all $ (\mathbf{x},\rho)\in \Pi(v)$.

\end{proposition}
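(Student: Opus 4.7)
The plan is to sandwich $\sum_{i \in N} x_i$ between $K_v$ from above and $K_v$ from below, using the two conditions in Definition \ref{def3} separately.

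First I would use the exact-payoff condition on blocks of $\rho$. Since $\rho$ is a partition of $N$, the allocations of agents can be grouped by block:
\begin{equation*}
\sum_{i \in N} x_i = \sum_{S \in \rho} \sum_{i \in S} x_i = \sum_{S \in \rho} v(S),
\end{equation*}
where the last equality uses $\sum_{i \in S} x_i = v(S)$ for every $S \in \rho$. Because $\rho$ is just one element of $\mathcal{P}(N)$, the right-hand side is at most $\max_{\rho' \in \mathcal{P}(N)} \sum_{S \in \rho'} v(S) = K_v$. This gives the upper bound $\sum_{i \in N} x_i \le K_v$.

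Next I would pick any welfare-maximizing partition $\rho^\star \in \arg\max_{\rho' \in \mathcal{P}(N)} \sum_{S \in \rho'} v(S)$ and apply coalitional rationality block by block. For each $S \in \rho^\star$, the condition $\sum_{i \in S} x_i \ge v(S)$ holds, so summing over the blocks of $\rho^\star$ yields
\begin{equation*}
\sum_{i \in N} x_i = \sum_{S \in \rho^\star} \sum_{i \in S} x_i \ge \sum_{S \in \rho^\star} v(S) = K_v,
\end{equation*}
which is the matching lower bound. Combining the two bounds delivers $\sum_{i \in N} x_i = K_v$.

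There is no real obstacle here; the argument is a direct unpacking of the definition, and the only thing worth being careful about is to cleanly separate the two roles that a partition plays in the proof: the partition $\rho$ from the core solution is what makes the block sums exact, while an externally chosen welfare-maximizing partition $\rho^\star$ is what converts coalitional rationality on all subsets into a global lower bound of $K_v$.
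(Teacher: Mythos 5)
Your proof is correct and follows essentially the same route as the paper: coalitional rationality summed over a partition gives the lower bound $\sum_{i\in N}x_i \ge K_v$ (the paper bounds over an arbitrary partition and then maximizes, while you instantiate a maximizer $\rho^\star$ directly — an immaterial difference), and the exactness condition on the blocks of $\rho$ gives the matching upper bound. No gaps.
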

\begin{proof}
Consider any $(\mathbf{x},\rho)\in \Pi(v).$ From the coalitional rationality, \begin{equation*}
    \sum_{i \in {S}}x_i \geq v({S})\ \forall {S} \subseteq {N}.
\end{equation*}
Then for any $\eta \in \mathcal{P}(N),$
\begin{equation*}
    \sum_{{S}\in \eta }\sum_{i \in {S}}x_i \geq \sum_{{S}\in \eta } v({S}),
\end{equation*}  and consequently, \begin{equation*}
    \sum_{i \in {N}}x_i\geq \sum_{{S}\in \eta } v({S}).
\end{equation*}
Since $\eta$ is an arbitrary partition of $N$, then \begin{equation}
\label{eqn1}
    \sum_{i \in {N}}x_i\geq \max_{\eta\in\mathcal{P}(N)}\sum_{S\in\eta}v(S).
\end{equation}
In addition, since $(\mathbf{x},\rho)\in \Pi(v),$  then, \begin{equation*}
    \sum_{i \in {S}}x_i = v({S})\ \forall {S} \in {\rho}.
\end{equation*}
Thus, \begin{equation*}
    \sum_{{S}\in \rho }\sum_{i \in {S}}x_i = \sum_{{S}\in \rho } v({S}),
\end{equation*} and since $\rho \in \mathcal{P}(N)$, then, \begin{equation}
\label{eqn2}
    \sum_{i \in {N}}x_i = \sum_{{S}\in \rho } v({S}).
\end{equation}

From (\ref{eqn1}) and (\ref{eqn2}), $$\sum_{i \in {N}}x_i = \max_{\eta\in\mathcal{P}(N)}\sum_{S\in\eta}v(S)=K_v.$$
\end{proof}
 
\begin{proposition}
\label{prop:prop2} 
 If $(\mathbf{x},\rho)\in \Pi(v),\ \mathbf{y}\in \mathbb{R}^n$ such that $\mathbf{y}$ is coalitionally rational, $\sum_{i \in {S}}y_i \geq v({S})\ $ for all $ {S}\subseteq {N},$ and $\sum_{i \in {N}}y_i=K_v,$ then $ (\mathbf{y},\rho)\in \Pi(v).$
\end{proposition}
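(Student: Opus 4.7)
The plan is to verify the two remaining requirements in Definition \ref{def3} for the pair $(\mathbf{y},\rho)$: coalitional rationality is already given by hypothesis, so the only thing left is the equality $\sum_{i\in S} y_i = v(S)$ for every $S\in\rho$. The partition structure $\rho$ is inherited from the assumed core solution $(\mathbf{x},\rho)$, so I would reuse everything $\rho$ provides through $\mathbf{x}$.

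The key observation is that Proposition \ref{prop:prop1} pins down $\sum_{S\in\rho} v(S) = K_v$: since $(\mathbf{x},\rho)\in\Pi(v)$, we get $\sum_{i\in S} x_i = v(S)$ for each $S\in\rho$, and summing over the partition yields $\sum_{S\in\rho} v(S) = \sum_{i\in N} x_i = K_v$ by Proposition \ref{prop:prop1}. So $\rho$ is in fact a maximum-welfare partition.

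The main step is then a standard ``no slack'' argument. By the coalitional rationality hypothesis on $\mathbf{y}$, each term in the partition sum satisfies $\sum_{i\in S} y_i - v(S) \geq 0$ for $S\in\rho$. Summing over $S\in\rho$ and using that $\rho$ is a partition of $N$ gives
\begin{equation*}
    \sum_{S\in\rho}\Bigl(\sum_{i\in S} y_i - v(S)\Bigr) \;=\; \sum_{i\in N} y_i - \sum_{S\in\rho} v(S) \;=\; K_v - K_v \;=\; 0.
\end{equation*}
A sum of nonnegative terms equal to zero forces each term to vanish, so $\sum_{i\in S} y_i = v(S)$ for every $S\in\rho$, which is precisely the efficiency condition needed. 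Combined with the coalitional rationality of $\mathbf{y}$, this gives $(\mathbf{y},\rho)\in\Pi(v)$.

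There is really no hard step here — the proof is essentially bookkeeping once Proposition \ref{prop:prop1} is invoked to identify $\sum_{S\in\rho} v(S)$ with $K_v$. If anything, the subtle point is noticing that $\mathbf{x}$ is used only to guarantee that the partition $\rho$ already achieves the maximum welfare; after that, $\mathbf{x}$ drops out and only $\mathbf{y}$'s efficiency plus coalitional rationality are needed.
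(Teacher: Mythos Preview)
Your proof is correct and follows essentially the same approach as the paper: both use Proposition~\ref{prop:prop1} to identify $\sum_{S\in\rho} v(S)=K_v$, sum the inequalities $\sum_{i\in S} y_i \geq v(S)$ over $S\in\rho$, and observe that the resulting inequality is actually an equality, forcing each summand to be tight. The only cosmetic difference is that the paper writes the sandwich $K_v=\sum_{i\in N}y_i\geq\sum_{S\in\rho}v(S)=K_v$, whereas you phrase it as a sum of nonnegative terms equal to zero.
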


\begin{proof}
    Given that $\sum_{i\in N}y_i=K_v$ and $(x,\rho)\in \Pi(v),$ then $\rho \in \mathcal{P}(N)$ and 
    \begin{equation}
        \label{eqn3}
        \sum_{S\in \rho}\sum_{i \in S}y_i= \sum_{i \in N}y_i=K_v.
    \end{equation}
    In addition, from Proposition \ref{prop:prop1} and the fact that $(x,\rho)\in \Pi(v),$ we get that 
    \begin{equation}
    \label{eqn4}
        K_v=\sum_{i \in N}x_i=\sum_{S\in \rho}\sum_{i \in S}x_i= \sum_{S\in \rho} v(S).
    \end{equation}
    From (\ref{eqn3}) and (\ref{eqn4}), 
    \begin{equation}
    \label{eqn5}
        \sum_{S\in \rho}\sum_{i \in S}y_i=\sum_{S\in \rho}v(S).
    \end{equation}
    We are given that 
    \begin{equation}
    \label{eqn6}
      \sum_{i  \in S}y_i\geq v(S)\ \forall S \subseteq N;  
    \end{equation}
     then, for (\ref{eqn5}) to be satisfied, we must have 
    \begin{equation}
    \label{eqn7}
      \sum_{i \in S}y_i=v(S)\ \forall S\in \rho.  
    \end{equation}
    Finally, (\ref{eqn6}) and ({\ref{eqn7}}) indicate that $(\mathbf{y},\rho)\in \Pi(v).$
\end{proof}

\section{COALITION PROPOSAL ALGORITHM}
 Algorithm 1 presents a pseudo code for our proposed algorithm. Informally, the algorithm proceeds as follows:

\begin{enumerate}[label=\arabic*.]
    \item Players come with arbitrary initial aspirations. The aspirations can be set to the players' singleton coalitions valuations or any bigger value on a grid of width $\delta$, a chosen discretization value.
    \item The algorithm then iterates over the following steps:
    \begin{enumerate}[label=\roman*.]
        \item A player is activated uniformly at random, which in turn chooses, using a uniform distribution, a set of other players to propose forming a coalition with. 
    \item The proposing player asks for, and receives, the current aspirations of the other players in the proposed coalition. 
    \item If the total of the received aspirations in addition to the proposing player's own aspiration raised by $\delta,$ is less than or equal to the proposed coalition valuation, the coalition is formed, otherwise, the proposal fails.
    \begin{enumerate}[label=\alph*.]
        \item If a coalition is successfully formed, the proposing player increases its aspiration by $\delta,$ and all the previous coalitions that had any player from the new coalition are dissolved. 
        \item If a coalition proposal fails, the proposing player decreases its aspiration by $\delta$ if both it is not in any other non-singleton coalition and it is not already at its singleton coalition valuation. Otherwise, the player's aspiration stays the same.
    \end{enumerate} 
    \end{enumerate}
\end{enumerate}

 Note that once we start the initial aspirations values on the $\delta$-discretized grid, the aspirations values stay on the $\delta$-discretized grid since the change in any player's aspiration from one iteration to the next is only a multiple of $\delta$.

\begin{algorithm}[!ht]
     \DontPrintSemicolon
    \caption{Coalition Proposal Algorithm}

    \SetKwProg{Fn}{Function}{:}{\KwRet}
    \KwInput{the set of players, ${N},$ and the characteristic function, $v$.}
    \KwSet{$\delta$ to be a discertization value where the $v(S)$ values for all $S\subseteq N$ lie on the $\delta$-discretized grid.}

  \KwInit{for all $i \in N, $\linebreak
   the player aspiration $a_i\leftarrow a^{0}_i,$ an arbitrary value on the $\delta$-discretized grid such that $a^{0}_i\geq v(\{i\})$, and \linebreak the player coalition 
   $C_i\leftarrow\emptyset $. }
   
    \Loop
    {
    Activate a player $i\in {N}$ uniformly at random.\\
        Player $i$ randomly chooses $S \subseteq {N}\setminus \{i\}.$ \\
        Player $i$ proposes to form coalition $J= S\cup\{i\}.$\\

    \eIf{$\sum_{j \in J}a_j+\delta \leq v(J)$ }
    { 
    $a_i\leftarrow a_i+\delta $  \\
    \text{Break old coalitions of players in }${J}:$ \linebreak For all $j\in J,$ for all $k\in C_j, k\neq j,\ C_k\leftarrow \emptyset$   \\

    Form coalition $J$: \linebreak $C_j\leftarrow J $ for all $j\in J$
    
    } 
    {\If{$C_i=\emptyset$}
    { $a_i\leftarrow\max(v(\{i\}),a_i-\delta$)\\
     \If{$a_i=v(\{i\})$}
    { $C_i\leftarrow \{i\}$
    }
   
    }

    }
    
    }

\end{algorithm}

\section{ANALYSIS OF THE COALITION PROPOSAL ALGORITHM}
In this section, we define the environment state at each iteration of the algorithm and prove how the environment state converges to a core solution of the input game $(N,v),$ whenever one exists.

\begin{definition}
    The \textbf{environment state} $(\mathbf{a},\mathcal{C})$ at any iteration of the Coalition Proposal algorithm is the \textbf{vector of aspirations} $\mathbf{a}\in \mathbb{R}^n$ and the \textbf{coalition structure} $\mathcal{C},$ where $\mathcal{C}$ is the set of disjoint formed coalitions at this iteration, i.e. $\mathcal{C}=\{C_i: i\in N\},$ where $C_i$ is player $i$'s current coalition.  
\end{definition}

Note that singleton coalitions can only belong to $\mathcal{C}$ when $a_i=v(\{i\}).$ Hence, $\mathcal{C}$ can be a strict subset of a partition of $N$ if there exists any agent with an empty coalition state.

\begin{definition}
    A \textbf{feasible environment state} is a pair of an aspiration vector and a coalition structure $(\mathbf{a},\mathcal{C}),$ where $\mathbf{a}\in \mathbb{R}^n$ and $\mathcal{C}$ is a set of disjoint subsets of $N$ such that $a_i\geq v(\{i\})$ for all $ i \in N$ and $\sum_{i \in S}a_i \leq v(S)$ for all $S \in \mathcal{C}.$ We denote the set of all feasible environment states by $\Omega(v).$ 

    \end{definition}

    \begin{proposition}
    Following the Coalition Proposal algorithm for a game $(N,v)$, starting from any environment state $(\mathbf{a},{\mathcal{C}})\in \Omega(v),$ the environment state stays in $\Omega(v)$.
\end{proposition}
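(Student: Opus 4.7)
The plan is an induction on the iteration index: assume the current state $(\mathbf{a},\mathcal{C})\in\Omega(v)$ and show that after one iteration of the algorithm, the updated state $(\mathbf{a}',\mathcal{C}')$ still satisfies the two defining conditions of $\Omega(v)$, namely (i) $a'_i\ge v(\{i\})$ for every $i\in N$ and (ii) $\sum_{j\in S}a'_j\le v(S)$ for every $S\in\mathcal{C}'$. Since the algorithm updates at most one aspiration per iteration (that of the activated player $i$) and only modifies coalition memberships in clearly localized ways, it suffices to do a small case analysis on the branch taken.

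First I would handle the successful-proposal branch. The only aspiration that changes is $a_i\mapsto a_i+\delta$, so the individual-rationality bound $a'_k\ge v(\{k\})$ for $k\ne i$ is inherited, and for $k=i$ it is preserved because $a_i$ only increases. For the coalition-rationality bound, I split $\mathcal{C}'$ into (a) the freshly formed coalition $J$ and (b) the surviving old coalitions. For (a), the guard of the \texttt{if} branch gives $\sum_{j\in J}a_j+\delta\le v(J)$, and after the update $\sum_{j\in J}a'_j=\sum_{j\in J\setminus\{i\}}a_j+(a_i+\delta)$, which is exactly the left-hand side of the guard, hence $\le v(J)$. For (b), the algorithm's \textbf{Break old coalitions} step removes from $\mathcal{C}$ every old coalition that contained any member of $J$; the coalitions that remain in $\mathcal{C}'$ therefore do not contain $i$, so none of their aspirations have changed and their feasibility bounds are inherited from the induction hypothesis.

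Next I would handle the failed-proposal branch. If $C_i\neq\emptyset$ at the time of the failure, no aspiration is updated and no coalition is dissolved, so both feasibility conditions pass through trivially. If $C_i=\emptyset$, then $a_i$ is replaced by $\max(v(\{i\}),a_i-\delta)$: the $\max$ directly preserves individual rationality, and since $i$ is not a member of any coalition in $\mathcal{C}$, lowering $a_i$ cannot violate any coalition bound. Finally I would address the concluding \texttt{if} clause: when $a_i=v(\{i\})$ and $C_i=\emptyset$, the algorithm adds the singleton $\{i\}$ to the coalition structure, and this is feasible because $\sum_{j\in\{i\}}a'_j=a_i=v(\{i\})\le v(\{i\})$; disjointness with the other members of $\mathcal{C}'$ is automatic since $i$ was in no coalition.

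I do not anticipate a true obstacle here; the result is essentially a bookkeeping check that each of the algorithm's updates is exactly calibrated to maintain the two inequalities defining $\Omega(v)$. The one spot to be careful about is the successful-proposal case, where I must verify that the \emph{pre-update} guard $\sum_{j\in J}a_j+\delta\le v(J)$ translates correctly into the \emph{post-update} coalitional-rationality bound for $J$, and that the break-and-reform step genuinely removes every previously existing coalition that could become infeasible or non-disjoint with $J$.
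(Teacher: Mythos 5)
Your proof is correct and follows essentially the same route as the paper's: an induction over iterations with a case analysis on the success or failure of the proposal, using the acceptance guard $\sum_{j\in J}a_j+\delta\le v(J)$ to certify the new coalition and the $\max$ with $v(\{i\})$ to preserve individual rationality. Your version is somewhat more explicit than the paper's (in particular about the break-old-coalitions step and the final singleton-formation clause), but the underlying argument is the same.
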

\begin{proof}
    At any iteration, if a state $(\mathbf{a},{\mathcal{C}})\in \Omega(v),$ then $\ a_i\geq v(\{i\})$ for all $ i \in N$ and $\sum_{i \in S}a_i \leq v(S)$ for all $S \in \mathcal{C}.$ In the next iteration, only the proposing player changes its aspiration. The proposing player can only increase its aspiration if $\sum_{i \in S}a_i +\delta \leq v(S)$ is satisfied for its proposed coalition $S$, which can possibly be the same as its old coalition. All coalitions that do not include the proposing player are either dissolved or have no change in their members' aspirations. Hence, the next iteration's state, $(\mathbf{a}^+,{\mathcal{C}}^+)$, still satisfies $\sum_{i \in S}a^+_i \leq v(S)$ for all $S \in \mathcal{C}^+.$ Furthermore, if the proposal fails, the proposing agent cannot decrease its aspiration beyond its singleton coalition value and the rest of the agents do not change their aspirations, therefore, $a^+_i$ remains greater than or equal to $ v(\{i\})$ for all $i\in N$. Accordingly, if the environment is in a feasible environment state at one iteration, then, following the Coalition Proposal algorithm, the environment stays in a feasible environment state for the next iteration. 
\end{proof}

\subsection{Convergence Cases}
We divide the TU games into two cases. The first case is when the game $(N,v)$ has an empty core, $\Pi(v)=\emptyset$, and the second is when the game $(N,v)$ has a non-empty core, $\Pi(v)\neq\emptyset$.

\begin{assumption}
\label{asm:asm1}
    The discretization value, $\delta$, is chosen such that all the values of $v(S)$ for all $ S\subseteq N$ lie on the $\delta$-discretized grid. When $\Pi(v)\neq \emptyset,$ the choice of $\delta$ must also guarantee that there exists a state $(\mathbf{x},\rho)\in \Pi(v)$ where all the values of $\mathbf{x}$ lie on the $\delta$-discretized grid.

\end{assumption}
\subsubsection{Games with empty core $\Pi(v)=\emptyset$ }
\begin{proposition}
    Following the Coalition Proposal algorithm for a game $(N,v)$ with $\Pi(v)=\emptyset,$ under Assumption \ref{asm:asm1}, and starting from any environment state $(\mathbf{a},{\mathcal{C}})\in \Omega(v),$ the environment state never converges.

    \label{prop:Empty}
\end{proposition}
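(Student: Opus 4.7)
The plan is to argue by contrapositive: characterize any absorbing state of the Coalition Proposal dynamics and show it must lie in $\Pi(v)$; since $\Pi(v)=\emptyset$ by hypothesis, no absorbing state exists, which on a finite state space rules out convergence.

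First I would verify the reachable state space is finite. Aspirations stay on the $\delta$-discretized grid (already noted in the paper), and from feasibility $a_j\ge v(\{j\})$ together with the success criterion $a_i+\delta+\sum_{j\in J\setminus\{i\}}a_j\le v(J)$, one bounds $a_i$ uniformly by a constant depending only on $v$. Since $\mathcal C$ is also a finite combinatorial object, only finitely many states are reachable from $(\mathbf a,\mathcal C)$, so "the state converges" is equivalent to "the dynamics reach a state that no random activation can alter."

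The central step is the characterization of such an absorbing state $(\mathbf a,\mathcal C)$. I would proceed in three stages. \emph{Coalitional rationality:} any activated $i$ may propose any $J\ni i$, and a successful proposal strictly increases $a_i$; so for absorption we need $\sum_{j\in J}a_j+\delta>v(J)$ for every non-empty $J\subseteq N$, and Assumption~\ref{asm:asm1} together with the $\delta$-grid sharpens this to $\sum_{j\in J}a_j\ge v(J)$. \emph{$\mathcal C$ is a partition of $N$:} if some $C_i=\emptyset$, then activating $i$ leads only to failing proposals, whereupon either $a_i$ strictly decreases, or $a_i=v(\{i\})$ and the final conditional of the algorithm sets $C_i\leftarrow\{i\}$; either outcome changes the state, contradicting absorption. \emph{Equality on formed coalitions:} for every $S\in\mathcal C$, letting $i\in S$ propose exactly $S$ must fail, so $\sum_{j\in S}a_j+\delta>v(S)$; combined with feasibility $\sum_{j\in S}a_j\le v(S)$ and the $\delta$-grid this yields $\sum_{j\in S}a_j=v(S)$. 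These three facts together say $(\mathbf a,\mathcal C)\in\Pi(v)$.

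With this characterization, the conclusion is immediate: the hypothesis $\Pi(v)=\emptyset$ leaves no absorbing state, so from every reachable state there is a strictly positive probability of transitioning to a different state at the next iteration. Hence on the finite reachable state space the chain almost surely changes state infinitely often and cannot converge. I expect the main obstacle to be the absorbing-state bookkeeping — in particular, making sure the final singleton-assignment clause of the algorithm is properly invoked to eliminate the case $C_i=\emptyset$, and being careful that the "no transition changes the state" condition is applied uniformly over all pairs $(i,J)$ with $i\in J\subseteq N$, not only over those consistent with the current coalition structure.
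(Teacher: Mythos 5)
Your proof is correct, and it runs the paper's argument in the contrapositive direction. The paper directly exhibits, from every feasible state, a positive-probability proposal that changes the state: it augments $\mathcal{C}$ to $\mathcal{C}'$ by adjoining singletons $\{i\}$ for free players with $a_i=v(\{i\})$, and splits on whether $\mathcal{C}'\in\mathcal{P}(N)$ (in which case emptiness of the core forces a blocking set $S$ with $\sum_{i\in S}a_i+\delta\le v(S)$, so some proposal succeeds) or not (in which case a free player with $a_i\ge v(\{i\})+\delta$ changes its aspiration whether its proposal succeeds or fails). You instead characterize absorbing states and show each would have to lie in $\Pi(v)$; your three stages (all proposals fail $\Rightarrow$ coalitional rationality on the $\delta$-grid; no free players, using the final singleton-assignment clause to turn a free player at $v(\{i\})$ into a state change; feasibility forcing equality on formed coalitions) use exactly the same ingredients as the paper's two cases, just reorganized. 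The differences are that you handle free-at-singleton-value players via the algorithm's last conditional rather than via the augmented structure $\mathcal{C}'$, and that you make explicit two things the paper leaves implicit: the finiteness of the reachable state space (so convergence on a discrete grid means eventual constancy) and the uniform positive lower bound on the per-step probability of a state change, which is what upgrades ``positive probability of change at every state'' to almost-sure non-convergence. Your characterization also gives, when combined with the paper's Proposition~\ref{prop:prop9}, the cleaner statement that the absorbing states are exactly the core solutions. No gaps; just make sure to cite the feasibility-invariance proposition when you invoke $\sum_{j\in S}a_j\le v(S)$ for $S\in\mathcal{C}$ at later iterations.
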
 
\begin{proof}
    Let  $\mathcal{C'}={\mathcal{C}} \cup \{\{j\}:C_j=\emptyset$ and $a_j=v(\{j\})\}.$ 
    
    First, consider the instances where $ {\mathcal{C'}}\in \mathcal{P}(N),\ \mathcal{C'}$ is a partition of $N.$ Since $ {\mathcal{C'}}\in \mathcal{P}(N),$ then $\sum_{j \in {S}}a_j \leq v(S)$ for all $S\in \mathcal{C'}.$ Yet, $(\mathbf{a},{\mathcal{C'}})\notin \Pi(v),$ then there exists a set $ {S}\subseteq N$ such that $\sum_{j\in {S}}a_j +\delta \leq v({S}).$ Hence, if any player $i\in {S}$ is randomly activated, there is a positive probability that $i$ will propose ${S}$ to form and hence, player $i$ will increase its aspiration by $\delta.$
    
Second, consider the remaining instances where ${\mathcal{C'}}\notin \mathcal{P}(N).$ Then, there exists a player $ i \in {N}$ such that $C_i=\emptyset$ and $a_i \geq v(\{i\})+\delta.$ If any such player $i$ is activated, then for any choice of a proposed coalition, $a_i$ will either increase or decrease its aspiration by $\delta,$ depending on the success or failure of the proposal. Consequently, for all instances of the game state $(\mathbf{a},{\mathcal{C}}),$ there is a positive probability of generating a proposal that would lead to a change in a player's aspiration.

\end{proof}
\subsubsection{Games with non-empty core $\Pi(v)\neq \emptyset$ }

\begin{thm}
    Following the Coalition Proposal algorithm for a game $(N,v)$ with $\Pi(v)\neq\emptyset,$ under Assumption \ref{asm:asm1}, and starting from any environment state $(\mathbf{a},{\mathcal{C}})\in \Omega(v),$ the environment state converges to some state $(\mathbf{x},\rho)\in\Pi(v)$ with probability one.

    \label{thm1}
\end{thm}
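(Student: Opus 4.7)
The plan is to view the dynamics as a finite-state Markov chain, identify its absorbing states as exactly the grid-aligned core solutions, and verify reachability of the absorbing set from every state with positive probability in finitely many steps; standard absorbing Markov chain theory then yields almost-sure convergence to $\Pi(v)$. For the state space, by Assumption \ref{asm:asm1} and the discretized update rule aspirations stay on the $\delta$-grid, are bounded below by $v(\{i\})$ and above by $K_v$ (via the feasibility preserved by the previous proposition), and $\mathcal{C}$ takes finitely many values, so the state space is finite. For the absorbing states I would argue that $(\mathbf{a},\mathcal{C})$ is absorbing iff (a) $\sum_{j\in J} a_j \geq v(J)$ for every $J \subseteq N$, so every proposal fails, and (b) $C_i \neq \emptyset$ for every $i$, so that a failed proposal leaves $a_i$ and $C_i$ unchanged and the end-of-iteration singleton attachment never triggers. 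Combined with feasibility, (a) and (b) force $\mathcal{C}$ to partition $N$ with $\sum_{j\in S} a_j = v(S)$ for every $S \in \mathcal{C}$, which by Definition \ref{def3} means $(\mathbf{a},\mathcal{C}) \in \Pi(v)$; conversely every grid-aligned core solution is absorbing.

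The heart of the proof is a reachability claim: from every $(\mathbf{a},\mathcal{C}) \in \Omega(v)$ there is a finite sequence of (activation, proposed coalition) pairs, each with positive conditional probability, leading to the distinguished grid-aligned core solution $(\mathbf{x}^*,\rho^*)$ supplied by Assumption \ref{asm:asm1}. I would construct the path in two phases. Phase A drives the state to a reset configuration in which $a_i = v(\{i\})$ and $C_i = \{i\}$ for all $i$. Phase B then, enumerating $\rho^* = \{S_1,\dots,S_m\}$, raises the aspirations inside each $S_k$ from $v(\{j\})$ up to $x^*_j$ by repeatedly activating some $j \in S_k$ with $a_j < x^*_j$ and proposing $S_k$; each such proposal succeeds because $\sum_{j\in S_k} a_j + \delta \leq \sum_{j\in S_k} x^*_j = v(S_k)$ as long as aspirations within $S_k$ are strictly below $\mathbf{x}^*|_{S_k}$, and the choice of activated player controls which aspiration increments by $\delta$. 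Disjointness of $\rho^*$ ensures that forming later $S_k$'s does not disturb the already completed ones.

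The hard part is Phase A: decreasing $a_i$ requires $C_i = \emptyset$, but breaking $i$'s current coalition itself requires a successful proposal involving another member of that coalition. The plan is to exploit the following dichotomy in every non-absorbing state: either coalitional rationality fails so a blocking coalition $J$ can be successfully proposed, simultaneously breaking any old coalitions sharing members with $J$; or coalitional rationality already holds while some $C_i = \emptyset$, in which case activating such $i$ either reduces $a_i$ by $\delta$ or triggers $C_i \leftarrow \{i\}$. Pairing this dichotomy with a lexicographic progress measure---for example $\sum_i a_i$ with a tiebreaker favoring simpler coalition structures---that strictly decreases under an appropriately chosen positive-probability move forces Phase A to terminate in finitely many steps at the reset state. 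With both phases assembled, every state has a uniform positive lower bound on the probability of absorption within a bounded number of steps, and standard absorbing Markov chain theory concludes almost-sure convergence to $\Pi(v)$.
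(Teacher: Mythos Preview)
Your Markov-chain framing, the identification of absorbing states with grid-aligned core solutions, and your Phase~B construction are all fine and close in spirit to the paper's argument. The genuine gap is Phase~A: the reset configuration you target, with $a_i=v(\{i\})$ and $C_i=\{i\}$ for every $i$, is in general \emph{unreachable} once the dynamics have formed any non-singleton coalition. The reason is an invariant of the algorithm: the only way to dissolve a non-singleton coalition $S$ is via a successful proposal $J$ that intersects $S$, but a successful proposal necessarily has $|J|\ge 2$ (a singleton proposal $J=\{i\}$ would require $a_i+\delta\le v(\{i\})$, contradicting $a_i\ge v(\{i\})$), so a new non-singleton coalition $J$ is created in the same step. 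Hence the set of states containing at least one non-singleton coalition is forward-invariant, and your Phase~A cannot terminate at the all-singleton reset from such states. Relatedly, your suggested primary potential $\sum_i a_i$ increases by $\delta$ at every successful proposal, so it cannot serve as a strictly decreasing progress measure in the case where you are forced to break a coalition.

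The paper sidesteps this obstruction by never aiming for the reset state. Instead it drives the aspiration vector directly toward a fixed grid-aligned $\mathbf{x}^*$ using the potential $\lVert\mathbf{a}-\mathbf{x}^*\rVert_1$: when some blocking set $S$ with $\sum_{i\in S}a_i<v(S)$ exists, one shows $S$ must contain a player $i$ with $a_i<x^*_i$, and having that $i$ propose $S$ succeeds and moves $a_i$ one $\delta$-step closer to $x^*_i$; when coalitional rationality already holds but $\sum_i a_i>K_v$, one shows there must be a \emph{free} player $i$ with $a_i>x^*_i$, whose failed proposal moves $a_i$ one $\delta$-step closer to $x^*_i$. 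Either way $\lVert\mathbf{a}-\mathbf{x}^*\rVert_1$ drops by $\delta$, so finitely many steps reach an aspiration vector that is coalitionally rational with total $K_v$; a short second stage then installs the partition $\rho^*$ without disturbing $\mathbf{a}$. If you want to salvage your two-phase plan, the fix is to replace the reset target in Phase~A by $\mathbf{x}^*$ itself (equivalently, replace $\sum_i a_i$ by $\lVert\mathbf{a}-\mathbf{x}^*\rVert_1$), at which point your Phase~B is essentially the paper's second stage.
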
 
The next two subsections establish the necessary propositions and arguments to prove this theorem. Henceforth in this section, we will only discuss the games where $\Pi(v)\neq \emptyset.$

\subsection{Feasible States Partition}
\begin{definition}
Given a game $(N,v),$ we define the following subsets of the feasible states, $\Omega(v).$
\begin{itemize}
    \item $\Gamma(v)=  \{(\mathbf{a},{\mathcal{C}}) \in \Omega(v):\sum_{i \in {N}}a_i=K_v,\\ \sum_{i\in S}a_i \geq v(S) \text{ for all } {S}\subseteq {N}, \text{ and }{\mathcal{C}\notin \mathcal{P}(N)}\}$
    \item[]
    \item $\Upsilon(v)=\{(\mathbf{a},{\mathcal{C}}) \in \Omega(v):\sum_{i \in {N}}a_i>K_v,\\ \text{ and } \sum_{i\in S}a_i\geq v(S) \text{ for all } {S}\subseteq {N}\} $
    \item[]
    \item $\Psi(v)=\{(\mathbf{a},{\mathcal{C}}) \in \Omega(v):\exists {S}\subseteq {N} \\ \text{ such that }  \sum_{i\in S}a_i < v(S)\}$ 

\end{itemize}
\end{definition}

\begin{proposition}
\label{prop:prop5}
    The set of core solutions $\Pi(v)$ is a subset of the feasible environment state $\Omega(v).$
\end{proposition}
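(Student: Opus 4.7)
The plan is to take an arbitrary $(\mathbf{x},\rho)\in\Pi(v)$ and directly verify the three requirements that make $(\mathbf{x},\rho)$ an element of $\Omega(v)$: namely, that $\mathbf{x}\in\mathbb{R}^n$, that $\rho$ is a collection of disjoint subsets of $N$, and that $\mathbf{x}$ satisfies the individual-rationality bound $x_i\geq v(\{i\})$ for all $i\in N$ together with the upper bound $\sum_{i\in S}x_i\leq v(S)$ for every $S\in\rho$.

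First I would unpack the definition of $\Pi(v)$: by Definition \ref{def3}, $\rho$ is a partition of $N$, so in particular its elements are pairwise disjoint subsets of $N$, satisfying the structural requirement demanded by the definition of $\Omega(v)$. Next I would obtain the individual-rationality inequality by specializing the coalitional-rationality condition $\sum_{i\in S}x_i\geq v(S)$ to the singleton coalitions $S=\{i\}$, which immediately yields $x_i\geq v(\{i\})$ for every $i\in N$.

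The remaining piece is the upper bound $\sum_{i\in S}x_i\leq v(S)$ for each $S\in\rho$. Here I would use the efficiency condition built into Definition \ref{def3}: for every $S\in\rho$ we actually have $\sum_{i\in S}x_i=v(S)$, which trivially implies the required inequality. Combining these three observations shows $(\mathbf{x},\rho)\in\Omega(v)$, and since $(\mathbf{x},\rho)\in\Pi(v)$ was arbitrary, the inclusion $\Pi(v)\subseteq\Omega(v)$ follows.

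There is no real obstacle here; the proposition is essentially a bookkeeping check that each defining condition of $\Pi(v)$ implies the corresponding (weaker) condition of $\Omega(v)$. The only subtlety worth flagging in writing is that $\Omega(v)$ requires the coalition structure to be a set of disjoint subsets, not necessarily a partition, so one should note explicitly that a partition is a special case, and that the equality $\sum_{i\in S}x_i=v(S)$ in the core definition is strictly stronger than the $\leq$ demanded by feasibility.
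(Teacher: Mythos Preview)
Your proposal is correct and follows essentially the same argument as the paper's own proof: take an arbitrary $(\mathbf{x},\rho)\in\Pi(v)$, obtain $x_i\geq v(\{i\})$ from coalitional rationality restricted to singletons, note that $\rho\in\mathcal{P}(N)$ gives the disjointness requirement, and use the equality $\sum_{i\in S}x_i=v(S)$ for $S\in\rho$ to get the feasibility inequality. The paper's version is simply terser, omitting the explicit remark that a partition is a special case of a disjoint collection and that equality implies the $\leq$ bound.
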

\begin{proof}
    Consider any $(\mathbf{x},\rho)\in \Pi(v).$ From coalitional rationality, $x_i\geq v(\{i\})$ for all $i\in N.$ In addition, $\rho \in \mathcal{P}(N)$ and $\sum_{i\in S}x_i =v(S)$ for all $S\in\rho.$ Hence, $(\mathbf{x},\rho)$ satisfies all the conditions of a feasible environment state and, thus, $(\mathbf{x},\rho)\in \Omega(v)$.
\end{proof}

\begin{proposition}
\label{prop:prop6}
    If $\Pi(v) \neq \emptyset, $ then for any $(\mathbf{a},{\mathcal{C}}) \in \Omega(v)$, if $\sum_{i\in S}a_i \geq v(S)\ $ for all $ S \subseteq N,$ then $\sum_{i \in N}a_i\geq K_v.$
\end{proposition}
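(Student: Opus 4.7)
The plan is to mirror the opening argument of Proposition~\ref{prop:prop1}: apply the coalitional rationality inequality term-by-term to an arbitrary partition, use the disjointness to collapse the left-hand side to $\sum_{i\in N}a_i$, and then take a maximum on the right-hand side to surface $K_v$.

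Concretely, I would fix an arbitrary $\eta\in\mathcal{P}(N)$. Since $\sum_{i\in S}a_i\geq v(S)$ is assumed to hold for \emph{every} subset $S\subseteq N$, it holds in particular for each $S\in\eta$. Summing these inequalities over $S\in\eta$ gives
\begin{equation*}
\sum_{S\in\eta}\sum_{i\in S}a_i \;\geq\; \sum_{S\in\eta}v(S).
\end{equation*}
Because $\eta$ is a partition of $N$, the double sum on the left collapses to $\sum_{i\in N}a_i$, yielding $\sum_{i\in N}a_i\geq \sum_{S\in\eta}v(S)$. Since $\eta$ was arbitrary, I can take the maximum on the right over $\eta\in\mathcal{P}(N)$ to obtain $\sum_{i\in N}a_i\geq K_v$ by Definition~\ref{def4}.

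There is no real obstacle here; the argument is essentially the first half of the proof of Proposition~\ref{prop:prop1} and does not even invoke the coalition structure $\mathcal{C}$ or the feasibility constraints that come with $\Omega(v)$. The hypothesis $\Pi(v)\neq\emptyset$ is not actually needed for the inequality itself, but is retained because the proposition will be applied downstream in the $\Pi(v)\neq\emptyset$ convergence analysis (where the value $K_v$ is matched to the core total, cf. Proposition~\ref{prop:prop1}). So the only thing to be careful about is writing the chain of (in)equalities cleanly and invoking Definition~\ref{def4} at the final step.
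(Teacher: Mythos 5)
Your proof is correct, but it takes a different route from the paper's. The paper fixes a specific core element $(\mathbf{x},\rho)\in\Pi(v)$, applies the hypothesis only to the coalitions $S\in\rho$, uses $\sum_{i\in S}x_i=v(S)$ for $S\in\rho$ to get $\sum_{i\in S}x_i\leq\sum_{i\in S}a_i$ coalition by coalition, and then invokes Proposition~\ref{prop:prop1} to identify $\sum_{i\in N}x_i$ with $K_v$. You instead bypass the core entirely: you apply the hypothesis to an arbitrary partition $\eta$, collapse the double sum, and maximize over $\eta$ to surface $K_v$ directly from Definition~\ref{def4} --- which is indeed just the first half of the proof of Proposition~\ref{prop:prop1} transplanted to $\mathbf{a}$. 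Your observation that the hypothesis $\Pi(v)\neq\emptyset$ is not actually needed is accurate and is a genuine (small) strengthening: the inequality $\sum_{i\in N}a_i\geq K_v$ holds for any coalitionally rational vector regardless of whether the core is empty. What the paper's version buys is essentially nothing beyond thematic consistency with its downstream use (it only ever applies the proposition when the core is nonempty); your version is the more elementary and more general of the two, and equally rigorous.
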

\begin{proof}
    Consider any $(\mathbf{a},{\mathcal{C}}) \in \Omega(v)$ such that $\sum_{i\in S}a_i\geq v(S) \text{ for all } {S}\subseteq {N}$ and any $(\mathbf{x},\rho) \in \Pi(v).$ Granted, \[v({S})\leq \sum_{i \in {S}}a_i\ \forall S \in \rho.\] Yet, from the definition of  $\Pi(v)$,  \[v({S}) = \sum_{i \in {S}}x_i\ \forall S \in \rho.\] 
    Hence,\[\sum_{i \in {S}}x_i\leq \sum_{i \in {S}}a_i\ \forall S \in \rho.\]
    From Proposition \ref{prop:prop1}, \[\sum_{i \in {N}}x_i=K_v.\]  As a result, \[K_v=\sum_{{S}\in \rho}\sum_{i \in {S}}x_i\leq \sum_{{S}\in \rho}\sum_{i \in {S}}a_i= \sum_{i \in {N}}a_i.\]
\end{proof}
\begin{proposition}

    $\{\Pi(v),\Gamma(v),\Upsilon(v),\Psi(v)\}$ is a partition of $ \Omega(v)$.
\end{proposition}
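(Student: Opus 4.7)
The plan is to verify that the four sets $\Pi(v)$, $\Gamma(v)$, $\Upsilon(v)$, $\Psi(v)$ are pairwise disjoint and that their union is all of $\Omega(v)$. Disjointness will follow almost directly from inspecting the defining inequalities (with Proposition \ref{prop:prop1} used once), while the covering step will be a case split applied to an arbitrary element of $\Omega(v)$ on two dichotomies: whether coalitional rationality holds globally, and (when it does) whether the total allocation equals or exceeds $K_v$, together with whether $\mathcal{C}$ is a partition of $N$.

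For disjointness, I would argue as follows. The set $\Psi(v)$ is disjoint from each of the other three because its defining condition --- the existence of some $S\subseteq N$ with $\sum_{i\in S}a_i<v(S)$ --- directly negates the coalitional rationality required by $\Pi(v)$, $\Gamma(v)$, and $\Upsilon(v)$. Next, $\Upsilon(v)$ is disjoint from both $\Pi(v)$ and $\Gamma(v)$ because its strict inequality $\sum_{i\in N}a_i>K_v$ excludes the equality $\sum_{i\in N}a_i=K_v$, which holds on $\Gamma(v)$ by definition and on $\Pi(v)$ by Proposition \ref{prop:prop1}. Finally, $\Pi(v)\cap\Gamma(v)=\emptyset$ because elements of $\Pi(v)$ have a coalition structure that is a partition of $N$, whereas $\Gamma(v)$ explicitly requires $\mathcal{C}\notin\mathcal{P}(N)$.

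For covering, I would pick an arbitrary $(\mathbf{a},\mathcal{C})\in\Omega(v)$ and first split on whether $\sum_{i\in S}a_i\geq v(S)$ holds for every $S\subseteq N$. If it fails for some $S$, the state lies in $\Psi(v)$. Otherwise, Proposition \ref{prop:prop6} gives $\sum_{i\in N}a_i\geq K_v$; if this inequality is strict, the state is in $\Upsilon(v)$, and if equality holds, I split further on whether $\mathcal{C}\in\mathcal{P}(N)$. When $\mathcal{C}\notin\mathcal{P}(N)$, the state is in $\Gamma(v)$ by inspection of the definitions.

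The one remaining case --- $\sum_{i\in N}a_i=K_v$ with $\mathcal{C}\in\mathcal{P}(N)$ and global coalitional rationality --- is the only substantive step, and I expect it to be the main obstacle: I need to verify that $(\mathbf{a},\mathcal{C})$ is actually a core solution in the sense of Definition \ref{def3}. The argument is short: feasibility gives $\sum_{i\in S}a_i\leq v(S)$ for every $S\in\mathcal{C}$, and coalitional rationality gives the reverse inequality, so equality holds on each $S\in\mathcal{C}$, supplying the last clause of Definition \ref{def3} while the others are already in hand. This places the state in $\Pi(v)$ and closes the case analysis, completing the proof that the four sets partition $\Omega(v)$.
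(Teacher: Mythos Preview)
Your proof is correct and follows essentially the same case decomposition as the paper's own argument, invoking Proposition~\ref{prop:prop6} at the same point. If anything, you are more thorough: the paper simply asserts that coalitional rationality together with $\sum_{i\in N}a_i=K_v$ and $\mathcal{C}\in\mathcal{P}(N)$ places the state in $\Pi(v)$, whereas you explicitly verify the equality $\sum_{i\in S}a_i=v(S)$ for each $S\in\mathcal{C}$ from feasibility plus coalitional rationality, which is the step actually needed to match Definition~\ref{def3}.
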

\begin{proof}
    From the definitions of $\Pi(v),\Gamma(v),\Upsilon(v),\Psi(v)$ and Proposition \ref{prop:prop5}, the sets $\Pi(v),\Gamma(v),\Upsilon(v),\Psi(v)$ are all subsets of $\Omega(v).$ In addition, using the sets' definitions and Proposition \ref{prop:prop6}, each $(\mathbf{a},{\mathcal{C}}) \in \Omega(v)$ belongs to exactly one of the aforementioned sets as follows. If there exists a set $ {S}\subseteq {N}$ such that $\sum_{i \in S}a_i<v(S),$ then $(\mathbf{a},{\mathcal{C}})\in \Psi(v).$ If $\sum_{i\in S}a_i\geq v(S) \text{ for all } {S}\subseteq {N}$ and $\sum_{i \in {N}}a_i> K_v,$ then $(\mathbf{a},{\mathcal{C}})\in \Upsilon(v).$ If, however, $\sum_{i\in S}a_i\geq v(S) \text{ for all } {S}\subseteq {N}$ and $\sum_{i \in {N}}a_i = K_v,$
    then $(\mathbf{a},{\mathcal{C}})\in \Pi(v)$ if ${\mathcal{C}}\in \mathcal{P}(N)$ and $(\mathbf{a},{\mathcal{C}})\in \Gamma(v)$ if ${\mathcal{C}}\notin \mathcal{P}(N).$
\end{proof}
\subsection{The Steering Sequences}
In this subsection, we exhibit possible positive probability sequences of proposals that steer any state $(\mathbf{a},{\mathcal{C}}) \in \Omega(v)$ to some state $(\mathbf{x},\rho) \in \Pi(v).$ The following definitions and propositions are needed for constructing the aforementioned sequences. 

\begin{definition}
  For any $\mathbf{a},\mathbf{x} \in \mathbb{R}^n$, define the \textbf{lower-valued indices set} as $L_{\mathbf{a},\mathbf{x}}=\{i:a_i<x_i\},$ the  \textbf{upper-valued indices set} as $U_{\mathbf{a},\mathbf{x}}=\{i:a_i>x_i\},$ and the \textbf{equal-valued indices set} as $E_{\mathbf{a},\mathbf{x}}=\{i:a_i=x_i\}.$
\end{definition}

\begin{definition}
  For a given $(\mathbf{a},{\mathcal{C}}) \in \Omega(v),$ the \textbf{banded set of players} is ${B}({\mathcal{C}})=\{i: \exists {S}\in {\mathcal{C}}$ such that $ i\in {S}$\} and the \textbf{free set of players} is $F(\mathcal{C})={N}\setminus B(\mathcal{C})$.
\end{definition}
\begin{proposition}
\label{prop:prop7}
    Assuming $\Pi(v)\neq \emptyset,$ then for all $ (\mathbf{a},{\mathcal{C}}) \in \Psi(v)$ and $(\mathbf{x},\rho) \in \Pi(v)$, if ${S}\subseteq {N}$ is such that $\sum_{i\in S}a_i < v(S)$, then $L_{\mathbf{a},\mathbf{x}} \cap {S} \neq \emptyset.$

\end{proposition}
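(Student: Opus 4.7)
The plan is to argue by contradiction, leveraging the coalitional rationality built into the definition of a core solution. Suppose, toward a contradiction, that $L_{\mathbf{a},\mathbf{x}} \cap S = \emptyset$. By definition of $L_{\mathbf{a},\mathbf{x}}$, this means every index $i \in S$ satisfies $a_i \geq x_i$, so summing over $S$ immediately gives $\sum_{i \in S} a_i \geq \sum_{i \in S} x_i$.

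Next, I would invoke the fact that $(\mathbf{x},\rho) \in \Pi(v)$ is a core solution, and in particular $\mathbf{x}$ is coalitionally rational (Definition \ref{def3}), which yields $\sum_{i \in S} x_i \geq v(S)$ for the specific $S \subseteq N$ under consideration. Chaining this with the previous inequality produces $\sum_{i \in S} a_i \geq v(S)$, which directly contradicts the hypothesis $\sum_{i \in S} a_i < v(S)$.

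Since the assumption $L_{\mathbf{a},\mathbf{x}} \cap S = \emptyset$ leads to a contradiction, we conclude $L_{\mathbf{a},\mathbf{x}} \cap S \neq \emptyset$. There is no substantive obstacle in this proof: it is a one-line contradiction once one unpacks the definition of $L_{\mathbf{a},\mathbf{x}}$ and applies coalitional rationality of the core allocation $\mathbf{x}$. The proposition is essentially a bookkeeping observation that any coalition $S$ with aspirations strictly below its value must contain at least one player whose aspiration is strictly below its core payoff, since otherwise $\mathbf{x}$'s coalitional rationality would be inherited by $\mathbf{a}$ on $S$.
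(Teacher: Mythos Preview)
Your proof is correct and follows essentially the same contradiction argument as the paper: assume $L_{\mathbf{a},\mathbf{x}}\cap S=\emptyset$, deduce $a_i\ge x_i$ for all $i\in S$, sum, and then use coalitional rationality of the core allocation $\mathbf{x}$ to obtain $\sum_{i\in S}a_i\ge v(S)$, contradicting the hypothesis. The paper only adds the intermediate observation $S\subseteq U_{\mathbf{a},\mathbf{x}}\cup E_{\mathbf{a},\mathbf{x}}$, which you implicitly absorb into the step ``every $i\in S$ satisfies $a_i\ge x_i$.''
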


 Proposition \ref{prop:prop7} states that if the core is nonempty and the game is in a $\Psi(v)$ state, then given any core solution and any set ${S}$ such that $\sum_{i\in S}a_i < v(S),$ there is at least one player in ${S}$ that has an aspiration lower than its allocation in that core solution.

\begin{proof}
    We prove this proposition by contradiction. Assume $\exists (\mathbf{a},{\mathcal{C}})\in \Psi(v), (\mathbf{x},\rho)\in\Pi(v),$ and ${S}\in \{{S'}\subseteq {N}:\sum_{i\in S'}a_i < v(S') \} $ such that $L_{\mathbf{a},\mathbf{x}} \cap {S} = \emptyset.$ \\Since $L_{\mathbf{a},\mathbf{x}} \cap {S} = \emptyset,$ then, \[ {S}\subseteq U_{\mathbf{a},\mathbf{x}}\cup E_{\mathbf{a},\mathbf{x}},\] which in turn implies that \[a_i\geq x_i\ \forall i\in {S}.\] Therefore, \[\sum_{i \in{S}}a_i\geq \sum_{i \in{S}}x_i.\] Given that $(\mathbf{x},\rho)\in\Pi(v),$ we get that \[\sum_{i \in{S}}a_i\geq \sum_{i \in{S}}x_i\geq v({S)},\] and thus, $\sum_{i\in S}a_i\geq v(S),$ which is a \textbf{contradiction}.
\end{proof}
\begin{proposition}
\label{prop:prop8}

    If $\ \Pi(v)\neq \emptyset,$ then for all $ (\mathbf{a},{C}) \in \Upsilon(v)$ and $(\mathbf{x},\rho) \in \Pi(v)$, $U_{\mathbf{a},\mathbf{x}}\cap F(\mathcal{C}) \neq \emptyset.$

\end{proposition}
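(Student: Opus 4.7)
The plan is to argue by contradiction, in the spirit of the proof of Proposition \ref{prop:prop7}. Suppose, toward contradiction, that there exist $(\mathbf{a},\mathcal{C}) \in \Upsilon(v)$ and $(\mathbf{x},\rho) \in \Pi(v)$ with $U_{\mathbf{a},\mathbf{x}} \cap F(\mathcal{C}) = \emptyset$. This says every index where $\mathbf{a}$ strictly exceeds $\mathbf{x}$ lies in the banded set $B(\mathcal{C})$; equivalently, on $F(\mathcal{C})$ we have $a_i \leq x_i$ pointwise. The goal is to deflate the total $\sum_i a_i$ to contradict the defining property $\sum_i a_i > K_v$ of $\Upsilon(v)$, together with $\sum_i x_i = K_v$ from Proposition \ref{prop:prop1}.

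Next, I would use the two sides of the coalition-level inequalities on each $S \in \mathcal{C}$: feasibility gives $\sum_{i\in S} a_i \leq v(S)$, and since $(\mathbf{x},\rho) \in \Pi(v)$, coalitional rationality gives $\sum_{i\in S} x_i \geq v(S)$. Chaining these yields $\sum_{i\in S}(a_i - x_i) \leq 0$ for every $S \in \mathcal{C}$. Because the coalitions in $\mathcal{C}$ are pairwise disjoint with union $B(\mathcal{C})$, summing over $S\in\mathcal{C}$ gives
\begin{equation*}
\sum_{i\in B(\mathcal{C})}(a_i - x_i) \;\leq\; 0.
\end{equation*}
Combined with the contradiction hypothesis that $a_i - x_i \leq 0$ for all $i \in F(\mathcal{C})$, summing over $N = B(\mathcal{C}) \cup F(\mathcal{C})$ gives $\sum_{i\in N}(a_i - x_i) \leq 0$, i.e., $\sum_{i\in N} a_i \leq \sum_{i\in N} x_i = K_v$. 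This contradicts $(\mathbf{a},\mathcal{C}) \in \Upsilon(v)$, which requires $\sum_{i\in N} a_i > K_v$, completing the argument.

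I do not expect a major obstacle here; the only subtlety is keeping clear that $\mathcal{C}$ is only a collection of disjoint coalitions, not necessarily a partition, so the decomposition $N = B(\mathcal{C}) \sqcup F(\mathcal{C})$ must be used instead of summing over a partition. Everything else is a direct chaining of the feasibility inequality, coalitional rationality of a core solution, and the efficiency identity $\sum_i x_i = K_v$ from Proposition \ref{prop:prop1}.
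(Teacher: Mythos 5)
Your proof is correct and follows essentially the same route as the paper's: contradiction, the pointwise bound $a_i \leq x_i$ on $F(\mathcal{C})$, the chain $\sum_{i\in S}a_i \leq v(S) \leq \sum_{i\in S}x_i$ on each $S\in\mathcal{C}$ summed over the disjoint coalitions to control $B(\mathcal{C})$, and Proposition \ref{prop:prop1} to conclude $\sum_{i\in N}a_i \leq K_v$, contradicting the definition of $\Upsilon(v)$. No gaps.
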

 Proposition \ref{prop:prop8} states that if the core is nonempty and the game is in a $\Upsilon(v)$ state, then given any core solution, there is at least one player that is free and has an aspiration higher than its allocation in the given core solution. 
  
\begin{proof}
    We prove this proposition by contradiction. Assume $\exists (\mathbf{a},{\mathcal{C}})\in \Upsilon(v)$ and $ (\mathbf{x},\rho)\in \Pi(v)$ such that $U_{\mathbf{a},\mathbf{x}}\cap F(\mathcal{C}) = \emptyset.$  Since $U_{\mathbf{a},\mathbf{x}} \cap F(\mathcal{C}) = \emptyset,$ then, \[ F(\mathcal{C})\subseteq L_{\mathbf{a},\mathbf{x}}\cup E_{\mathbf{a},\mathbf{x}},\] which in turn implies that \[a_i\leq x_i\ \forall i\in F(\mathcal{C}).\] Hence, 
    \begin{equation}
    \label{eqn:eqn7}
        \sum_{i \in F(\mathcal{C})}a_i\leq \sum_{i \in F(\mathcal{C})}x_i.
    \end{equation}
    Since $(\mathbf{a},{\mathcal{C}})\in\Omega(v),$ then \begin{equation*}
        \sum_{i\in B(\mathcal{C})}a_i=\sum_{{S}\in{\mathcal{C}}}\sum_{i\in{S}}a_i\leq\sum_{{S}\in{\mathcal{C}}}v({S}).
    \end{equation*} In addition, $(\mathbf{x},\rho)\in \Pi(v),$ which implies that
    \begin{equation*}
        \sum_{{S}\in{\mathcal{C}}}v({S})\leq \sum_{{S}\in{\mathcal{C}}}\sum_{i\in{S}}x_i .
    \end{equation*}
    As a result,
    \begin{equation}
    \label{eqn:eqn8}
        \sum_{i\in B(\mathcal{C})}a_i \leq  \sum_{i\in B(\mathcal{C})}x_i. 
    \end{equation}
    From (\ref{eqn:eqn7}), (\ref{eqn:eqn8}), and Proposition \ref{prop:prop1}, \[\sum_{i\in{N}}a_i \leq \sum_{i\in{N}}x_i=K_v,\] which \textbf{contradicts} the definition of $\Upsilon(v)$.
\end{proof}
  
The steering sequences are used to show that there is a positive probability of picking a sequence of proposals that lead any feasible state $(\mathbf{a},{\mathcal{C}})$ to a core solution of the game when the set of core solutions is non-empty. The steering sequence is constructed upon choosing and fixing any $(\mathbf{x}^*,\rho^*)\in \Pi(v)$ that lies on the $\delta$-discretized grid, which is assumed to exist by Assumption \ref{asm:asm1}. The sequence has two main stages. If $(\mathbf{a},{\mathcal{C}})\in \Upsilon(v)\cup \Psi(v),$ the sequence will follow the first stage proposals till it reaches $\Gamma(v)\cup \Pi(v).$ When the state reaches or starts in $\Gamma(v),$ the sequence will follow the second stage proposals.

\par
\textbf{First stage:} If the state $(\mathbf{a},{\mathcal{C}})\in \Upsilon(v)\cup \Psi(v),$ we choose a feasible proposal $(i,S)$ that transitions this state to a state $(\mathbf{a}^+,{\mathcal{C}}^+)$ with $\lVert \mathbf{a}^+-\mathbf{x}^* \rVert _1 = \lVert \mathbf{a}-\mathbf{x}^* \rVert _1-\delta$. Choosing $(i,S)$ means that player $i$ proposes to form $S.$ Specifically, the chosen proposals will be as follows:

\begin{itemize}
    \item If $(\mathbf{a},{\mathcal{C}})\in \Psi(v),$ pick ${S}\subseteq N$ such that $\sum_{i\in S}a_i < v(S).$ From Proposition \ref{prop:prop7}, $L_{\mathbf{a},\mathbf{x^*}} \cap {S} \neq \emptyset.$ Pick any $i\in L_{\mathbf{a},\mathbf{x}^*} \cap {S}.$ Then, \textbf{the proposal} ${(i,{S})}$ \textbf{is chosen.} ${S}$ will be successfully formed and only player $i$'s aspiration will increase by $\delta,$ $a^+_i=a_i+\delta$. Since $i\in L_{\mathbf{a},\mathbf{x^*}},$ $\lVert \mathbf{a}^+-\mathbf{x}^* \rVert _1 = \lVert \mathbf{a}-\mathbf{x}^* \rVert _1-\delta.$

    \item If $(\mathbf{a},{\mathcal{C}})\in \Upsilon(v),$ from Proposition \ref{prop:prop8}, $U_{\mathbf{a},\mathbf{x^*}} \cap F(\mathcal{C}) \neq \emptyset.$ Pick any $i\in U_{\mathbf{a},\mathbf{x^*}} \cap F(\mathcal{C})$ and any ${S}\subseteq {N}$ such that $i\in{S}.$ Then, \textbf{the proposal} ${(i,{S})}$ \textbf{is chosen.} ${S}$ will fail to form and player $i$ is free and has an aspiration $a_i \geq v(\{i\})+\delta.$ Hence, the proposal failure will cause the player's aspiration to decrease by $\delta,$ $a^+_i=a_i-\delta$. Since $i\in U_{\mathbf{a},\mathbf{x^*}},$ $\lVert \mathbf{a}^+-\mathbf{x}^* \rVert _1 = \lVert \mathbf{a}-\mathbf{x}^* \rVert _1-\delta.$

\end{itemize}

Note that if $(\mathbf{a},{\mathcal{C}})\in \Upsilon(v)\cup \Psi(v),$ the above proposals will consistently reduce the $1$-norm between the new state and $\mathbf{x^*}$ by $\delta.$ We can continue performing these proposals until the state reaches $\Gamma(v) \cup \Pi(v).$ This will happen within a finite number of steps, which is at most the $1$-norm, between the starting state and the chosen core solution, divided by $\delta.$

\textbf{Second stage: } If the state $(\mathbf{a},{\mathcal{C}})\in \Gamma(v),$ a sequence of proposals is selected to transition said state to a new state $(\mathbf{a},{\mathcal{C}'})\in \Pi(v).$

\begin{itemize}
    \item[] If $(\mathbf{a},{\mathcal{C}})\in \Gamma(v),$ then $F(\mathcal{C})\neq \emptyset.$ Pick any $i\in F(\mathcal{C})$ and pick the set ${S}\in \rho^*$ where $i\in{S}.$ \textbf{Choose the proposal} ${(i,{S})}.$ From Proposition \ref{prop:prop2}, $(\mathbf{a},{\rho^*})\in \Pi(v).$ Hence, $ \sum_{j\in S}a_j= v(S) \text{ for all } {S}\in \rho^*. $ Thus, the proposal will fail resulting in $a^+_i=a_i-\delta$ and $v(S)-\sum_{j\in S}a^+_j=\delta.$ Then \textbf{pick the proposal} ${(i,{S})}$ again. Now, the proposal will succeed and $a^{++}_i=a^+_i+\delta=a_i$ and $v(S)-\sum_{j\in S}a^{++}_j=0.$ Now, $\mathbf{a}^{++}=\mathbf{a}$, hence, the new state is either in $\Gamma(v)$ or ${\Pi}(v).$ If the state is still in $\Gamma(v),$ then make another iteration of the second stage.
\end{itemize}

Note that this stage ends in a finite number of steps. That is because if ${S}\in \rho^*$ is chosen as a part of a proposal in one iteration, none of the players in $ {S}$ become free again. Hence, the number of iterations of the above two proposals is bounded by the number of elements of $\rho^*.$

\textbf{The sequence achieves its target whenever the state reaches $\Pi(v).$} As demonstrated next, once a state reaches $\Pi(v),$ the state can never change using the Coalition Proposal algorithm.
\begin{proposition}
\label{prop:prop9}
    A feasible state $(\mathbf{a},\mathcal{C})$ is absorbing in  the Coalition Proposal algorithm if and only if it belongs to $\Pi(v).$ 
\end{proposition}
\begin{proof}
First, we will prove that if a state $(\mathbf{a},{\mathcal{C}}) \in \Pi(v)$ then it is absorbing. Since $(\mathbf{a},{\mathcal{C}}) \in \Pi(v)$, then $ \sum_{i\in S}a_i \geq v(S) \ $ for all ${S}\subseteq {N},$ hence, no proposal can succeed. Consequently, no player will be able to increase its aspiration nor can any change in the coalitions happen. Furthermore, ${\mathcal{C}}$ is a partition of ${N};$ thus all players are in non-singleton coalitions, or at their individual valuations, and hence, will not decrease their aspiration from the failed proposals. Accordingly, no proposal at this state can change the aspirations or the coalition structure.

Second, the proof of the other direction, a feasible state is absorbing implies that the state is a core solution, follows by contraposition from the proof of Proposition \ref{prop:Empty}. Proposition \ref{prop:Empty} shows that if a state does not belong to the core, then there is always a proposal occurring with positive probability that leads to a change in the environment state. 
\end{proof}
 
In conclusion, the proof of Theorem \ref{thm1} follows from the proof of the existence of a finite steering sequence from any state $(\mathbf{a},{\mathcal{C}})\in \Omega(v)$ to a state $(\mathbf{x},\rho)\in \Pi(v).$ Such sequences of proposals occur at any feasible state with at least some probability $p>0$ that does not depend on the state. By the Borel-Cantelli lemma, the probability that one such sequence is followed at some iteration goes to one as the number of iterations goes to infinity. Finally, Proposition \ref{prop:prop9} shows that any state $(\mathbf{x},\rho)\in \Pi(v)$ is an absorbing state.

\section{AN ILLUSTRATIVE EXAMPLE}
In this section, we apply our algorithm to a multi-agent task allocation setting. We consider self-interested and heterogeneous agents. In task fulfillment problems, it is justifiable to assume that a number of heterogeneous agents are needed to complete a task because of the different resources and capabilities that each agent has \cite{c16}. Self-interested agents aim to maximize their own benefits be it through collaborations or individual actions. In such settings, the core solution provides a satisfactory allocation for said agents. Given a core solution, no group of agents can gain more by deviating from the proposed allocation and coalition structure. In addition, the core solution guarantees the optimal social welfare, the total allocations to the agents equals the maximum welfare value.

\subsection{Setup}
Formulate the multi-agent task allocation problem as a game ${G}=({N},v).$ The set of players ${N}={A}\cup {T},$ where ${A}$ is a set of $m$ agents and ${T}$ is a set of $n$ tasks. $F$ is a set of $k$ features. Each agent is assumed to be equipped with a non-empty subset of these features. Matrix ${Q}$ is an $m\times k$ binary matrix such that $Q_{a,f}=1$ if feature $f$ is present in agent $a$ and $Q_{a,f}=0$ otherwise. For the tasks, ${R}$ is an $n\times k$ binary matrix of task requirements such that $R_{t,f}=1$ if task $t$ requires feature $f$ to be present in the group of agents fulfilling the task and $R_{t,f}=0$ otherwise. The worth of any task $t,$ is characterized by the function $W:T \rightarrow \mathbb{R},$ which is the base value of fulfilling the task. Here, we assume this value to be proportional to the complexity of the task, i.e. the number of features required to fulfill the task. The agents and tasks are set to have a location on a bounded grid. The location is specified by an $(m+n)\times 2$ matrix ${L}.$

 The characteristic function $v(S)$ for $ S\subseteq N$ is formulated as follows: \\
 $v(S)=0$ if 
 \begin{itemize}
      \item $|S|=1, \text{ the coalition is a singleton, or}$
     \item $|T\cap S| \neq 1,$ the coalition does not have exactly one task in it, or
     \item $\min_{f\in F} (\sum_{a \in A \cap S}Q_{a,f}-\sum_{t \in T \cap S}R_{t,f})<0,$ the union of the features that the agents in the coalition have does not cover all of the required features to fulfill the coalition's task. 
 \end{itemize}
Otherwise, \[v(S)=\max\{0, W(t)- \sum_{a \in A \cap S} \lVert L_a-L_t \rVert _1\},\] where $t=T\cap S$ and $L_i$ is the location vector of player $i$.
 
The following are additional assumptions on the setup and the agents' implementation of the algorithm.
\begin{itemize}
    \item We are assuming a static task allocation setup where the agents are required to fulfill only a subset of the available tasks.
    \item The current setup assumes full communication, however, the dynamics are readily applied to limited communication setups so long as agents that are part of any positive-valued coalition can communicate with each other. 
    \item To avoid unnecessary proposals, we assume that agents propose only to form coalitions of positive values, i.e. when an agent proposes a coalition, it makes sure that there is exactly one task in the coalition, the set of features of the agents in the coalition fulfills the task requirements and the agents are close enough to allow for a positive-valued coalition. 
    \item Tasks are assumed to be passive players, they do not propose or have payoff aspirations.
    \item Proposing agents broadcast formed coalitions so that other agents know when tasks in their coalitions leave and cause the dissolution of said coalitions.
\end{itemize}

\subsection{Simulation}
 \emph{Parameters:}
The setup used to produce the succeeding runs is as follows. On a $9\times9$ grid, we randomly generated the $L$ matrix for $10$ players and $20$ tasks. The $Q$ and $R$ matrices were randomly generated accounting for $5$ features.
 \subsubsection{Sample run}
Fig.\ \ref{figure1} shows a visualization of a randomly generated scenario with the above parameters and the coalition structure produced from the attained core solution after running the Coalition Proposal algorithm. Squares represent tasks and circles represent agents. Agents and tasks in the same coalition have the same color and are connected by a dashed line. Grey circles, whenever they exist, are agents that have no benefit of becoming a part of any coalition and grey squares are tasks that were not chosen to be fulfilled. 

Fig.\ \ref{fig2} plots the total aspirations of the players over the communication rounds resulting from running the Coalition Proposal algorithm. The dotted line is the optimal welfare value as solved by a linear program representation of the problem. 

\begin{figure}[thpb]
      \centering
    \includegraphics[scale=0.185]{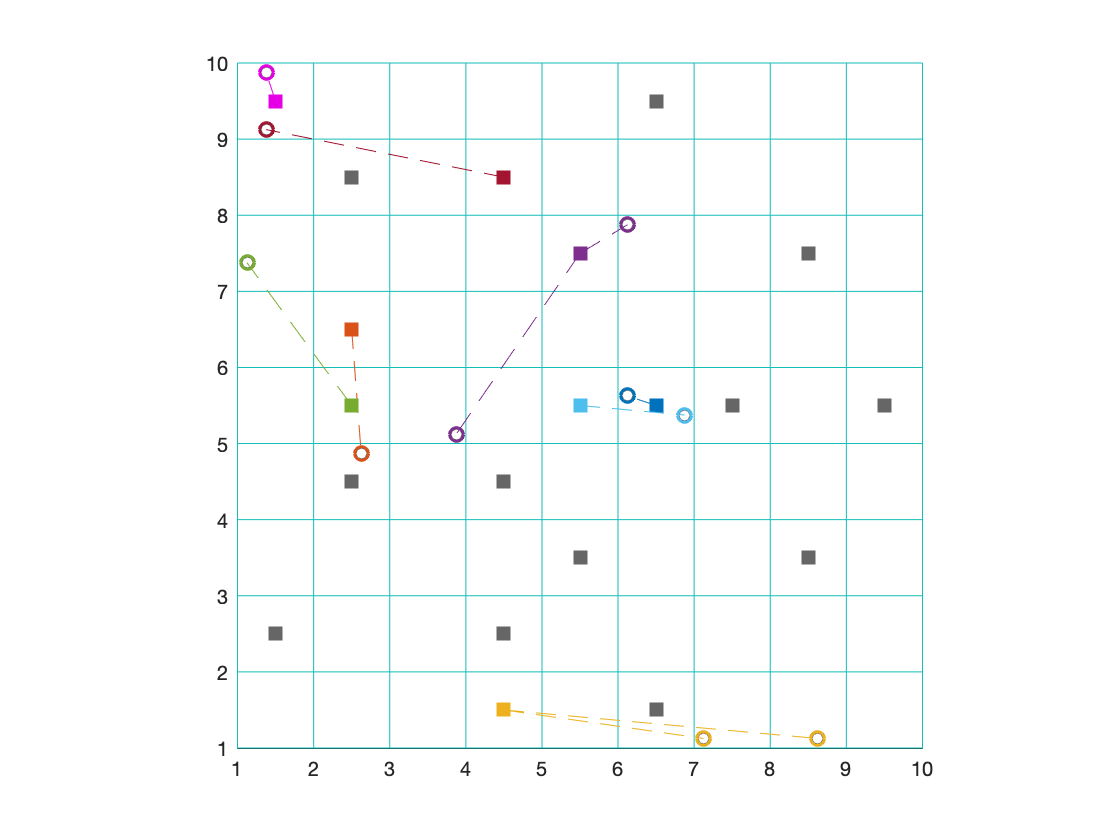}
     
      \caption{A visualization for the tasks and agents distribution over the grid and the formed core coalition structure after running the Coalition Proposal algorithm.}
      \label{figure1}
   \end{figure}
\begin{figure}[thpb]
      \centering
    \includegraphics[scale=0.185]{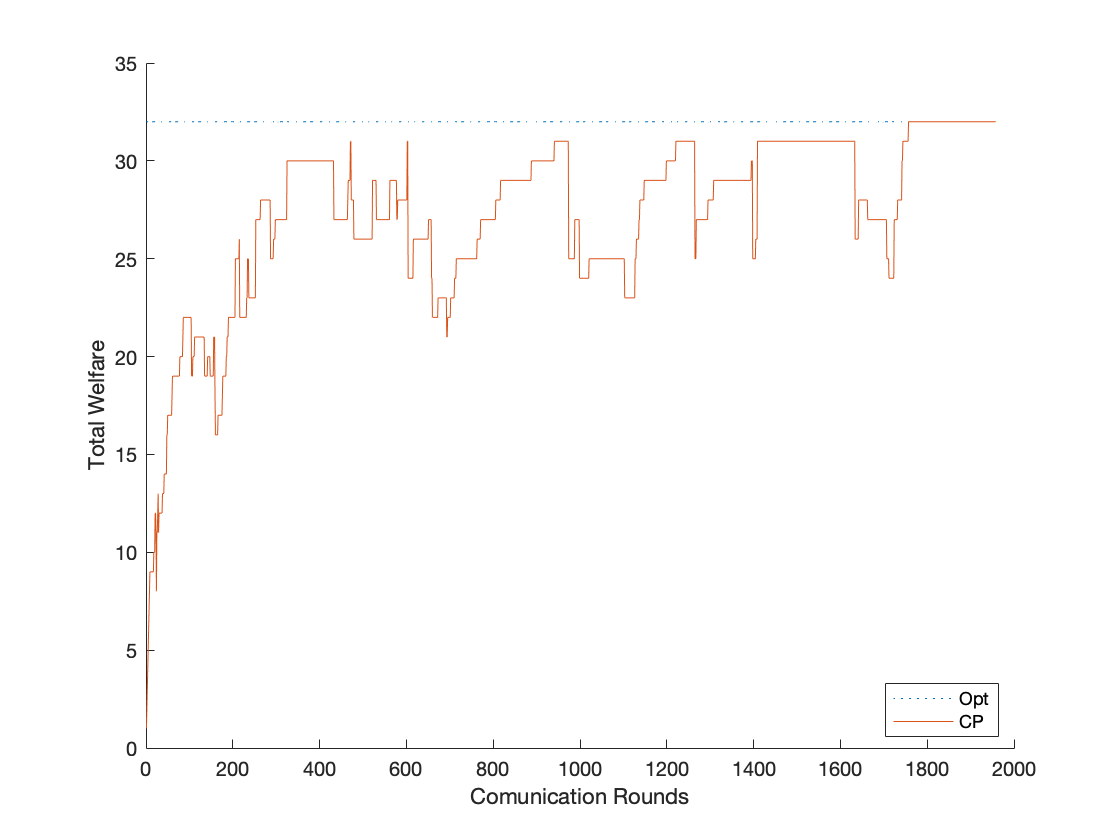} 
     
      \caption{A plot of the total aspirations of the agents obtained from running the Coalition Proposal algorithm, the ``CP" solid line, and the optimal welfare value solved by a linear program, the ``Opt" dotted line.}
      \label{fig2}
   \end{figure}

\subsubsection{Empirical performance for configurations with non-empty core}

 To illustrate the performance of our algorithm in the specified setting. We considered 50 random configurations that produced games with non-empty sets of restricted core solutions. The restricted core solutions are core solutions that allocate zero payoffs to all the tasks. This restriction follows our assumption about the tasks being passive players. We ran our algorithm in addition to three best reply algorithms from \cite{c10} and \cite{c11}. We set the experimentation parameter of the best reply algorithms to $0.05$ and Bernoulli agent activation probability to $0.1$. The best reply algorithm with experimentation was proven to converge to the core in \cite{c10}. The best reply algorithm without experimentation converges very quickly, but possibly, to suboptimal solutions outside the core. For the best reply algorithms with experimentation, an activated agent will not experiment before searching for a coalition, among all of its possible coalitions outside the current coalition structure, that can strictly increase the agent's payoff. This search is computationally expensive. To be able to calculate the best reply, the states of the agents must include the global coalition structure state and the demands of all other agents. 

We ran our proposed algorithm, the best reply algorithm from \cite{c10}, the best reply algorithm with experimentation from \cite{c10}, and the best reply algorithm with experimentation using only feasible demands from \cite{c11} on the aforementioned 50 configurations. Fig.\ \ref{fig3} shows plots of the average of the relative welfare, across the 50 configurations, for the four algorithms. The relative welfare for each configuration is the total aspirations of the players divided by the optimal welfare value.

 \begin{figure}[thpb]
      \centering
  \includegraphics[scale=0.185]{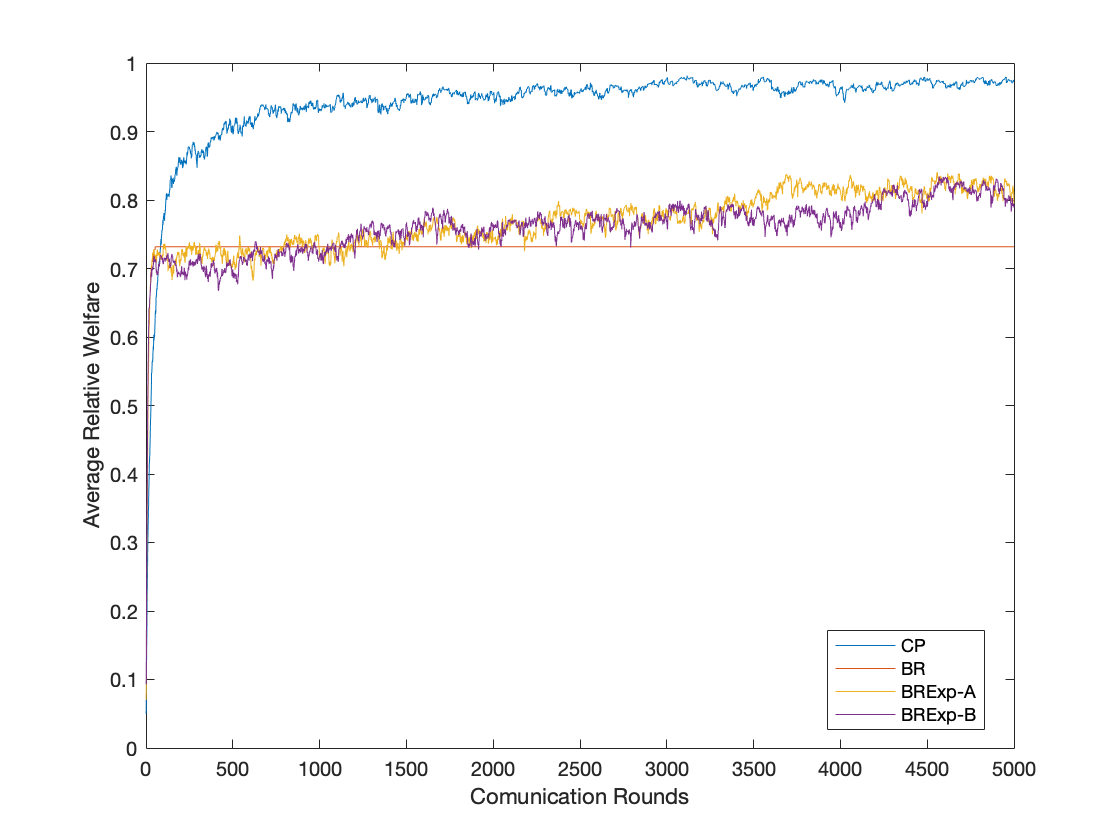}

      \caption{The empirical average performance of the algorithms across 50 configurations with non-empty restricted core. The ``CP" line is for the Coalition Proposal algorithm, the ``BR" line is for the best reply dynamics \cite{c10}, the ``BRExp-A" line is for the best reply dynamics with experimentation \cite{c10}, and the ``BRExp-B" line is for the best reply dynamics with experimentation using only feasible demands \cite{c11}.}
      \label{fig3}
   \end{figure}
\subsubsection{Empirical performance for general configurations}
Using the specified parameters, the configuration resulted in games with non-empty core around half of the time. However, since our algorithm produces a feasible outcome whenever it is terminated, we explored its performance on 50 general configurations, configurations that may produce games with empty core. Even though the algorithm is proven to cycle whenever the core is empty, it may cycle within allocations close to the optimal values. Hence, the algorithm can still produce a good feasible solution if negotiations are set to be terminated after a specified time. Fig.\ \ref{fig4} shows the plots of the average relative welfare using our algorithm in addition to the previously discussed best reply algorithms.

\begin{figure}[thpb]
      \centering
  \includegraphics[scale=0.185]{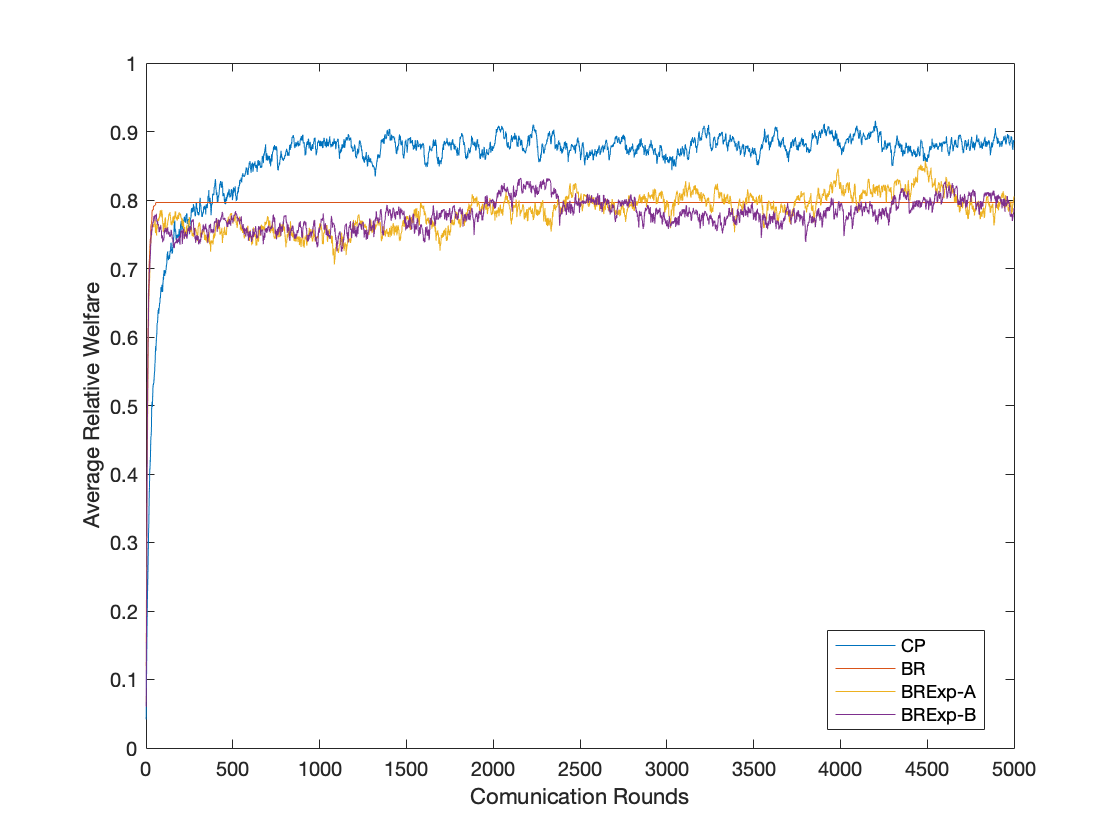}

      \caption{The empirical average performance of the algorithms across 50 general configurations. The ``CP" line is for the Coalition Proposal algorithm, the ``BR" line is for the best reply dynamics \cite{c10}, the ``BRExp-A" is for the best reply dynamics with experimentation \cite{c10}, and the ``BRExp-B" is for the best reply dynamics with experimentation using only feasible demands \cite{c11}. }
      \label{fig4}
   \end{figure}
\subsection{Communication Drops}
An additional consideration in multi-agent systems is communication failures. In many systems, a small percentage of message drops is inevitable. To explore our proposed algorithm's tolerance against communication drops, we have allowed messages that inform agents of the dissolution of a coalition to be dropped with some percentage. Fig. \ref{fig5} illustrates the Coalition Proposal algorithm's performance in the presence of varying levels of communication drops. Even though the message drops affect the convergence guarantee, they allow for faster growth of welfare. Specifically in the initial rounds, a false assumption of being in a coalition dissuades the agents from decreasing their aspirations from failed proposals. Hence, the total welfare gets close to the optimal welfare fast.

 The empirical simulations that we have performed illustrated the advantages of using our algorithm in this multi-agent task allocation setting. Our proposed algorithm allows for negotiation-based distributed decision-making. The algorithm only involves simple computations from one agent in each round of communication and requires limited local knowledge of the environment. Furthermore, we have observed empirically that the algorithm can still reach the optimal welfare value even in the presence of a small percentage of communication failures.

 \begin{figure}[thpb]
      \centering
  \includegraphics[scale=0.185]{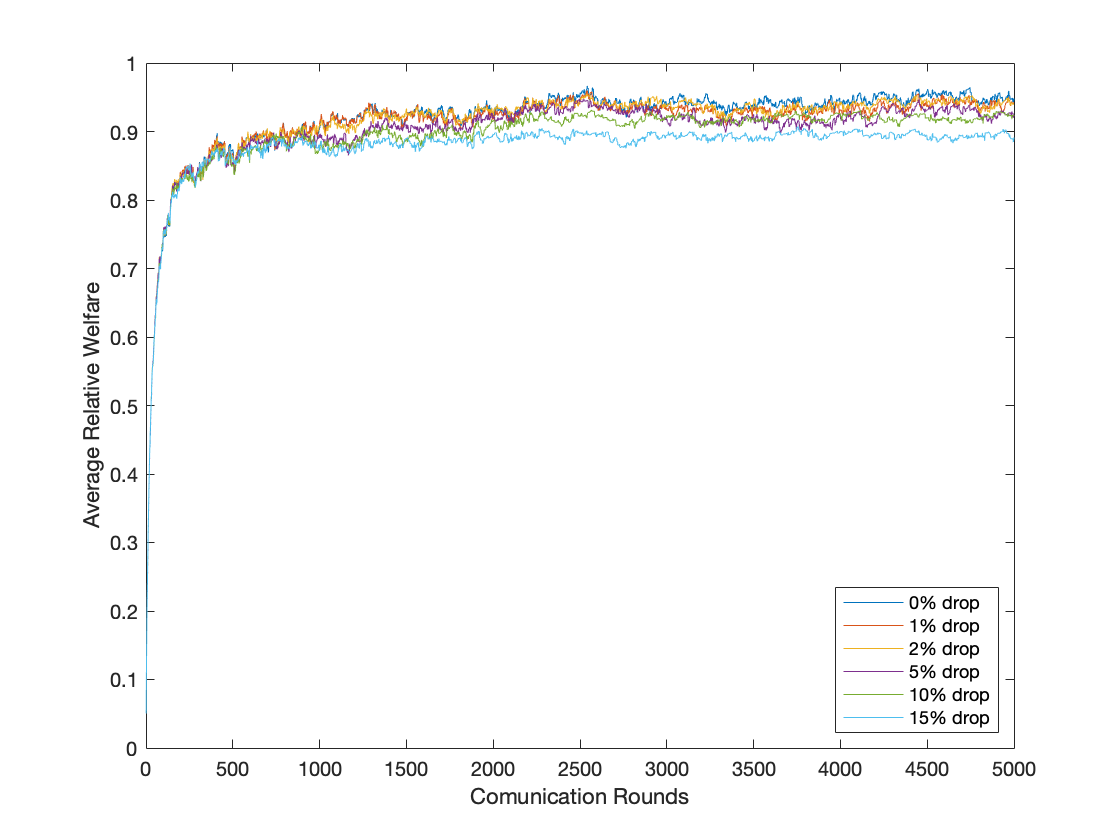}

      \caption{The empirical average performance of the Coalition Proposal algorithm across 50 configurations with non-empty restricted core in the presence of varying levels of communication drops.}
      \label{fig5}
   \end{figure}
 
 \section{CONCLUSION}
 We introduced distributed learning dynamics for coalitional games. We discussed a core solution concept for general TU games. The core solution provides the agents with payoff allocations that preserve individual and coalitional rationality and achieve the optimal social welfare. We proved the convergence of our proposed dynamics to a core solution, whenever one exists. Finally, we illustrated the learning dynamics on a multi-agent task allocation setting and compared it to best reply algorithms and the optimal social welfare value. Our dynamics exhibited desirable performance in simulation for convergence in perfect communication setups as well as in the presence of small percentages of communication failures.

\addtolength{\textheight}{-12cm}   



\begin{thebibliography}{99}


\bibitem{c1} J. Mei, C. Chen, J. Wang, and J. L. Kirtley, ``Coalitional game theory based local power exchange algorithm for networked microgrids," \emph{Applied Energy}, vol. 239, pp. 133\textendash 141, 2019.

\bibitem{c2} Y. Du \emph{et al.}, ``A cooperative game approach for coordinating multi-microgrid operation within distribution systems," \emph{Applied Energy}, vol. 222, pp. 383\textendash 395, 2018.

\bibitem{c3} W. Saad, Z. Han, and H. V. Poor, ``Coalitional game theory for cooperative micro-grid distribution networks," in \emph{2011 IEEE International Conference on Communications Workshops (ICC)}, Kyoto, Japan, 2011, pp. 1\textendash 5.

\bibitem{c4} W. Saad, Z. Han, M. Debbah, A. Hj{\o}rungnes, and T. Ba\c{s}ar, ``Coalitional game theory for communication networks," \emph{IEEE Signal Processing Magazine}, vol. 26, no. 5, pp. 77\textendash 97, 2009.

\bibitem{c5} R. J. La and V. Anatharam, ``A game-theoretic look at the Gaussian multiaccess channel," \emph{DIMACS Series in Discrete Mathematics and Theoretical Computer Science}, vol. 66, pp. 87\textendash 105, 2004.

\bibitem{c6} F. Fele, E. Debada, J. M. Maestre, and E. F. Camacho, ``Coalitional control for self-organizing agents," \emph{IEEE Transactions on Automatic Control}, vol. 63, no. 9, pp. 2883\textendash 2897, 2018.

\bibitem{c7} M. Li, H. Wu, and L. Chen, ``Research on multi-robot cooperative task assignment based on cooperative game," in \emph{2021 4th International Conference on Intelligent Autonomous Systems (ICoIAS)}, Wuhan, China, 2021, pp. 424\textendash 429.
 
\bibitem{c8} J. G. Martin, F. J. Muros, J. M. Maestre, and E. F. Camacho, ``Multi-robot task allocation clustering based on game theory," \emph{Robotics and Autonomous Systems}, vol. 161, 2023.

\bibitem{c9} A. Churkin, J. Bialek, D. Pozo, E. Sauma, and N. Korgin, ``Review of Cooperative Game Theory applications in power system expansion planning," \emph{Renewable and Sustainable Energy Reviews}, vol. 145, 2021.

\bibitem{c10} T. Arnold and U. Schwalbe,
``Dynamic coalition formation and the core,"
\emph{Journal of Economic Behavior and Organization},
vol. 49, no. 3, pp. 363\textendash 380, 2002.

\bibitem{c11} F. Bistaffa and A. Farinelli, ``A fast approach to form core-stable coalitions based on a dynamic model," in \emph{ 2013 IEEE/WIC/ACM International Joint Conferences on Web Intelligence (WI) and Intelligent player Technologies (IAT)}, Atlanta, GA, USA, 2013, pp. 122\textendash 129.

\bibitem{c12} D. Hamza and J. S. Shamma, ``BLMA: A blind matching algorithm with application to cognitive radio networks," \emph{IEEE Journal on Selected Areas in Communications}, vol. 35, no. 2, pp. 302\textendash 316, 2017.

\bibitem{c13}D. Hamza and J. S. Shamma, ``Many-to-one blind matching for device-to-device communications," in \emph{ 2017 IEEE 56th Annual Conference on Decision and Control (CDC)}, Melbourne, VIC, Australia, 2017, pp. 4988\textendash 4993.

\bibitem{c14} H. H. Nax, ``Uncoupled aspiration adaptation dynamics into the core," \emph{German Economic Review}, vol. 20, no. 2, pp. 243\textendash 256, 2019.

\bibitem{c15} H. Peters, ``Cooperative games with transferable utility," in \emph{Game theory: A Multi-Leveled Approach.} Germany: Springer Berlin, Heidelberg, 2008.

\bibitem{c16} S. Airiau, ``Cooperative games and multiagent systems," \emph{The Knowledge Engineering Review}, vol. 28, no. 4, pp. 381\textendash 424, 2013.  


\end{thebibliography}
\end{document}